\newtheorem{assumption}{Assumption}
\newtheorem{proposition}{Proposition}
\newtheorem{corollary}{Corollary}
\newtheorem{lemma}{Lemma}
\newtheorem{remark}{Remark}
\newcommand{\Gstas}{G_2}
\newcommand{\col}{ \mbox{col} }
\def\L2e{{\cal L}_{2e}}
\def\rea{\mathbb{R}}
\def\diag{\mbox{diag}}
\def\rank{\mbox{rank}}
\def\begequarrs{\begin{eqnarray*}}
	\def\endequarrs{\end{eqnarray*}}
\def\begequarr{\begin{eqnarray}}
	\def\endequarr{\end{eqnarray}}
\def\begarr{\begin{array}}
	\def\endarr{\end{array}}
\def\begequ{\begin{equation}}
	\def\endequ{\end{equation}}
\def\lab{\label}
\def\begdes{\begin{description}}
	\def\enddes{\end{description}}
\def\begenu{\begin{enumerate}}
	\def\begite{\begin{itemize}}
		\def\endite{\end{itemize}}
	\def\endenu{\end{enumerate}}
\def\lef[{\left[\begin{array}}
	\def\rig]{\end{array}\right]}
\def\begcen{\begin{center}}
	\def\endcen{\end{center}}
\def\begrem{\begin{remark}\rm}
	\def\endrem{\end{remark}}
\def\begcas{\begin{cases}}
	\def\endcas{\end{cases}}
\title{\LARGE \bf {Robust PI Passivity--based Control of Nonlinear Systems: Application to Port--Hamiltonian Systems and Temperature Regulation}}%
\date{}
\author{S. Aranovskiy$^{1}$ R. Ortega$^{2}$ and R. Cisneros$^{2}$
\thanks{$^{1}$S. Aranovskiy is with the Department of Control Systems And Informatics, Saint-Petersburg University ITMO :  {\tt\small s.aranovskiy@gmail.com}}%
\thanks{$^{2}$R. Ortega and R. Cisneros are with the LSS-Supelec, 3, Rue Joliot-Curie, 91192 Gif--sur--Yvette, France : {\tt\small ortega@lss.supelec.fr} and {\tt\small rafael.cisneros@lss.supelec.fr} }}
\begin{document}
\maketitle
\thispagestyle{empty}
\pagestyle{empty}
	
\begin{abstract}
This paper deals with the problem of control of partially known nonlinear systems, which have an open--loop stable equilibrium, but we would like to add a {\em PI controller} to regulate its behavior around another operating point. Our main contribution is the identification of a class of systems for which  a globally stable PI can be designed knowing  {\em only} the systems input matrix and measuring {\em only} the actuated coordinates. The construction of the PI is done invoking passivity theory. The difficulties encountered in the design of {\em adaptive} PI controllers with the existing theoretical tools are also discussed. As an illustration of the theory, we consider port--Hamiltonian systems and a class of thermal processes.
\end{abstract}

%
\section{INTRODUCTION}
\lab{sec1}	
%
In many practical applications the plant to be controlled has an open--loop stable equilibrium, {\em e.g.}, at the origin, and we would like to add a controller to regulate its behavior around another operating point. In the case of linear systems the dynamics remains invariant under coordinate shifts, therefore this task can be easily accomplished using the incremental model of the plant. Unfortunately, this is not the case for nonlinear systems, for which there is no obvious advantage of working with the incremental model.

Another common requirement in applications is the use of standard proportional-integral (PI) controllers, which overwhelmingly dominate the industrial market. Although commissioning a PI to operate around a single operating point is relatively easy, the performance will be below par in wide operating regimes, which is the scenario in modern high--performance applications. To overcome this drawback the current practice is to re--tune the gains of the PI controllers based on a linear model of the plant evaluated at various operating points, a procedure known as gain--scheduling. There are several  disadvantages of gain--scheduling including the need to switch (or interpolate) the controller gains  and the non--trivial definition of the regions in the plants state space where the switching takes place---both problems are exacerbated if the dynamics of the plant is highly nonlinear. Another common commissioning procedure is to use auto--tuners,  that heavily rely on the availability of a ``good" linear approximation of the plant dynamics. To avoid the need to rely on linearization it is necessary to develop a procedure to design robust PI controllers for nonlinear systems with uncertain parameters.

Motivated by the discussion above in this paper we identify a class of (input affine) nonlinear systems for which it is possible to design a PI controller with the following features.
\begite
\item[F1] Regulation of the closed--loop system around the desired (non--zero) operating point should be guaranteed.
\item[F2] The PI controller should be {\em robust}, in the sense that reduced knowledge of the system parameters is required.
\item[F3] To simplify the controllers commissioning, a well defined admissible range of variation for the PI proportional and integral gains, preserving closed--loop stability, should be provided.
\endite
We propose the construction of a PI controller with the features F1--F3 for plants with unknown dynamics verifying the following assumptions.
\begite
\item[A1] The open--loop system is {\em unknown} but has a stable equilibrium at the origin.
\item[A2] The desired equilibrium belongs to the assignable set and admits a {\em convex} Lyapunov function.
\item[A3] The Lyapunov function is the sum of two functions, depending on the un--actuated and actuated coordinates, respectively. The first function is {\em unknown} while the second one is separable and linearly parameterized in terms of some {\em unknown parameters}.
\item[A4] The input matrix is constant, known and has $n-m$ zero rows, where $n$ and $m$ are the dimensions of the state  and  input vectors, respectively.
\endite
As indicated in the article's title we exploit the fundamental property of {\em passivity} to design the proportional-integral passivity based controller, which will be referred in the sequel as PI--PBC. The first step in the design is to, relying on A1 above, invoke the celebrated theorem of Hill and Moylan \cite{VAN} to identify a suitable passive output for the system,  with storage function the Lyapunov function of the open--loop system. Since our interest is the regulation of non--zero equilibria, we then use the results of \cite{JAYetal} to create a new passive output for the incremental model with a storage function that has a minimum at the desired equilibrium. As shown in \cite{JAYetal}, feeding back the passive output through a PI controller ensures stability of the desired equilibrium {\em for all} positive definite PI gains. It is important to underscore that, since the passivity property has been established for the incremental model, the equilibrium can also be stabilized setting the control input equal to the (constant) value that assigns the equilibrium, say $u^*$, which is univocally defined. However, this open--loop control will, obviously, be non--robust. In the robustness context of the present paper neither the plant dynamics nor the Lyapunov function are known and, consequently, we cannot compute neither the passive output nor $u^*$. It is at this point that we invoke A3 and A4 above to prove that, under these assumptions, it is possible to define ranges for the proportional and integral gains that make the PI--PBC implementable and, consequently, guarantee stability of the equilibrium. Another important feature of the proposed PI--PBC is that it requires only partial measurement of the state, namely, only the $m$ state variables associated to the non--zero rows of the input matrix, referred in the sequel as {\em actuated} coordinates.

Several practical applications of PI--PBC have been reported in the literature. This include, RLC circuits \cite{CASetal}, power converters \cite{HERetal}, fuel cells \cite{TALetal}, electric drives \cite{MARALE} and mechanical systems \cite{SANetal}. In \cite{DONJUN} a procedure to add an integral action to a non--passive output for a class of port--Hamiltonian systems was first proposed, and later extended in \cite{ORTROM}, \cite{ROMDONORT}. To the best of our knowledge, the present paper is the first attempt to design PI--PBCs with guaranteed stability properties for systems with partially known dynamics.

A natural question that arises at this point is the incorporation of adaptation in the design of the PI (or PID). In the  power converter application of \cite{HERetal} a parameter that enters in the definition of the passive output, {\em i.e.}, the load resistance, is adaptively identified---however, all other parameters are assumed to be known. In the interesting paper \cite{ANTAST} it is shown that it is possible to adaptively estimate $u^*$ for a general nonlinear system with scalar input, keeping the estimate in a known interval, provided the passive output is known. In spite of a large number of publications the problem of designing a {\em provably stable} adaptive PID for systems with unknown parameters remains, as far as we know, open. The difficulty of this task was identified already in 1984 in \cite{ORTKEL}. As is well--known  \cite{SASBOD}, the stability of indirect adaptive methods relies on parameter convergence that, in its turn, requires persistency of excitation---a property that is not satisfied in the regulation tasks where PI control is effective. On the other hand, the application of direct methods is stymied by the absence of a suitable parameterization of this structure--constrained controller. For the PI--PBC studied in this paper the main difficulty is the need to estimate two objects, that appear multiplicatively in the Lyapunov analysis: the passive output and the ideal control signal  $u^*$. This point is further elaborated in Subsection \ref{subsec52}.

The remaining of the paper is organized as follows. Section \ref{sec2} presents the problem formulation. To streamline the presentation of the main result, which is given in  Section \ref{sec4}, some preliminary lemmata are given in Section \ref{sec3}. In Section \ref{sec5} we discuss the reasons that stymie the use of adaptation and the inability to state a robustness result based on continuity and approximate prior knowledge of the plant. Section \ref{sec6} is devoted to application of the proposed PI--PBC for port--Hamiltonian (pH) systems \cite{VAN}  and a temperature regulation problem. The paper is wrapped--up with concluding remarks in  Section \ref{sec7}.
\vspace{2mm}

\noindent {\bf Notations} $I_n$ is the $n \times n$ identity matrix and $\mathbf{0}_{n \times s}$ is an
$n \times s$ matrix of zeros, $\mathbf{0}_n$ is an $n$--dimensional column vector of zeros. Given $a_i \in \rea,\; i \in \bar n := \{1,\dots,n\}$, we denote with $\col(a_i)$ the $n$--dimensional column vector and $\diag\{a_i\}$ the diagonal $n \times n$ matrix with
elements $a_i$. For $x \in \rea^n$,  $|x|$ is the Euclidean norm. For mappings of scalar argument $g:\rea \to \rea^s$, $g'$ and $g''$ denote first and second order differentiation, respectively. For mappings $f:\rea^n \to \rea$, $\nabla f :=
(\frac{\partial f}{\partial x})^\top$ and $\nabla^2 f:= \frac{\partial^2 f}{\partial x^2}$. For the distinguished element $x^* \in \rea^n$ and any mapping $F:\rea^n \to \rea^s$ we denote $F^*:=F(x^*)$ and the error signal  $\tilde F(x):=F(x)-F^*$. 
%
\section{PROBLEM FORMULATION}
\lab{sec2}	
%
In this section we formulate the control problem addressed in the paper, enunciate the assumptions made on the plant to solve it and make some remarks on these assumptions.
%
\subsection{Robust PI control problem}
\lab{subsec21}	
Consider the nonlinear, input affine, system
\begin{align}
	\dot x &= f(x) + G u, \label{systform}
\end{align}
where $x\in \mathbb{R}^n$, $u\in\mathbb{R}^m$, $n >m$, $f: \mathbb{R}^n \to \mathbb{R}^n$ is an {\em unknown} smooth mapping, $G\in\mathbb{R}^{n\times m}$ is {\em constant} verifying $\rank(G)= m$.

The following is a key assumption made throughout the paper.

\begin{assumption} 
\label{as:g}
The matrix $G$ has $n-m$ zero rows. Without lost of generality\footnote{See R6 in the next subsection and Subsection \ref{subsec51} for more general forms of $G$.} it is assumed of the form
\begequ
\lab{g}
	G=\begin{bmatrix} \mathbf{0}_{(n-m)\times m} \\ G_2\end{bmatrix},
\endequ
where $G_2 \in \mathbb{R}^{m \times m}$ is {\em known}.
\end{assumption}

This assumption can be easily {\em obviated} introducing state and input changes of coordinates. Indeed, it is well--known---see, {\em e.g.}, Theorem 2 of Section 2.7 of \cite{LANTIS}---that for any full rank, matrix $G\in\mathbb{R}^{n\times m}$ there exists (elementary) full rank matrices $T \in \rea^{n \times n}$ and  $S \in \rea^{m \times m}$ such that
$$
T G S = \begin{bmatrix} \mathbf{0}_{(n-m)\times m} \\ I_m\end{bmatrix}.
$$
Consequently, introducing $z = Tx$ and $v=S^{-1}u$ the system \eqref{systform} takes the desired form
$$
\dot z = w(z) + \begin{bmatrix} \mathbf{0}_{(n-m)\times m} \\ I_m\end{bmatrix}v,
$$
where $w(z)=Tf(T^{-1}z)$. We should note, however, that a change of state representation destroys---in general---the {\em original structure} of the system, whose knowledge may be critical for the verification of the second assumption below. This fact is clearly illustrated in the physical examples considered in Section \ref{sec6}. For this reason, we prefer to leave it as an standing assumption. 

Motivated by Assumption \ref{as:g} we find convenient to define a partition of the state vector into its un--actuated and actuated components as
$$
x=\lef[{c} x_u \\ x_a\rig],\;x_u:=\lef[{c} x_1 \\ x_2 \\ \vdots \\ x_{n-m}\rig],\;x_a:=\lef[{c} x_{n-m+1}\\x_{n-m+2}\\ \vdots \\ x_n\rig].
$$
It is assumed that {\em only $x_a$ is available for measurement.}

The unforced system, that is, $\dot x = f(x)$,
has a stable equilibrium at the origin with a {\em partially known} Lyapunov function.  We are interested in controlling the system with a PI at a non--zero equilibrium---a situation that arises in most practical applications.
Thus, we are given a desired equilibrium point, $x^{\star}\in\mathbb{R}^n$, and the control goal is to ensure {\em stability} of this equilibrium using a PI control law of the form
\begequarrs
\dot{z} &= - K_I \psi(x_a,x^*)\\
u &= - K_P  \psi(x_a,x^*) + z,		
\endequarrs
where $z \in \rea^m$ is the controller state, $K_P \in \rea^{m \times m}$ and $K_I \in \rea^{m \times m}$ are tuning gains and $\psi: \mathbb{R}^m \times \mathbb{R}^n \to \mathbb{R}^m$ is a mapping designed with the partial knowledge of the aforementioned Lyapunov function.

The following, practically reasonable, assumption is made throughout the paper.
\begin{assumption} 
\label{as:xstar}
The desired equilibrium point $x^\star$ belongs to the assignable equilibrium set, that is,
\begin{equation}\label{eqset}
	x^\star\in\mathcal{E}:=\left\{x \in \rea^n \ |\  \lef[{ccc} I_{n-m} & | & \mathbf{0}_{(n-m) \times n} \rig] f(x)= 0 \right\}.
\end{equation}
\end{assumption}
%
\subsection{Assumptions on the open--loop plant}
\lab{subsec22}
The following assumption identifies the class of vector fields $f(x)$ for which we provide an answer to the problem.\\

\begin{assumption} 
\label{as:H}
For the system \eqref{systform} there exists a twice--differentiable, positive definite function $H:\mathbb{R}^n \to \mathbb{R}_{\geq 0}$, verifying the following.\\
\begin{enumerate}[(i)]
	\item $[\nabla H(x)] ^\top f(x) \le 0$.\\
	\item $[\nabla H (x)-\nabla H(x^\star)]^\top \tilde f(x)=:-Q(x) \le 0$.\\
	\item The function $H(x)$ is of the form
$$
H(x)=H_u(x_u) + H_a(x_a)
$$
with
\begequ
\lab{h}
H_a(x_a) = \sum_{i=n-m+1}^n{d_i \phi_i(x_i)},
\endequ
where the function $H_u:\rea^{n-m} \to \rea$ and the constants $d_i>0$ are {\em unknown} but the  the functions  $\phi_i: \mathbb{R} \to \mathbb{R}$ are {\em known}.\\
\item The functions $H_u(x_u)$ and  $\phi_i(x_i)$ are {\em convex}.
\end{enumerate}
\end{assumption}
%
\subsection{Discussion}
\lab{subsec23}
The following remarks regarding the assumptions are in order.

\begite
\item[R1] Although the set ${\cal E}$ is not known, it is reasonable to assume that we have enough prior knowledge about the plant to select the desired operating point as a feasible equilibrium. Hence, Assumption \ref{as:xstar} is reasonable.

\item[R2] A corollary of Assumption \ref{as:xstar}  is that  the constant input $u^\star$, that assigns the equilibrium, is uniquely defined as
\begin{equation}\label{ustar}
	u^\star:=\left(G_2^\top G_2 \right)^{-1}\begin{bmatrix} \mathbf{0}_{m\times (n-m)} & G^\top_2\end{bmatrix} f^\star.
\end{equation}
Notice that, without knowledge of $f(x)$, this constant cannot be computed.

\item[R3] Since the open--loop system \eqref{systform} has a stable equilibrium at the origin Assumption \ref{as:H} (i) follows as a corollary of Lyapunov's converse theorems \cite{KHA}. As will become clear below Assumption  \ref{as:H} (ii) and (iv) are required to prove passivity of the incremental model as done in \cite{JAYetal}. 

\item[R4] We underscore that no assumption, beyond twice differentiability and convexity, is imposed on the unknown component $H_u(x_u)$ of the  Lyapunov function of the open--loop system $H(x)$. On the other hand, stricter conditions are imposed on the second component $H_a(x_a)$, with uncertainty captured by the unknown constants $d_i$.

\item[R5] Assumptions  \ref{as:H} (iii) and Assumption  \ref{as:g} are the key requirements imposed on the plant to design the robust PI--PBC. This assumption is satisfied by a large class of physical systems, including a class of port--Hamiltonian \cite{VAN}  and thermal systems studied in Section \ref{sec6}.

\item[R6] Regarding Assumptions  \ref{as:g}, in the more general case when $G$ is not of the form \eqref{g} an additional shuffling of the rows of $G$ is needed in the design. This procedure is explained in  Subsection \ref{subsec51}.

\item[R7] For quadratic Lyapunov functions of the form $H(x)=x^\top P x$, with $P>0$, Assumption  \ref{as:H} (ii) is satisfied if the open--loop system is {\em convergent} in the sense of Demidovich \cite{PAVetal}. That is, if it satisfies
$$
P \nabla f(x) +  [\nabla f(x)]^\top P \leq 0.
$$  
\endite
%
\section{PRELIMINARY LEMMATA}
\lab{sec3}	
%
Unless otherwise indicated, throughout the rest of the paper Assumption \ref{as:g} holds. Define for the system \eqref{systform} the output
\begin{equation} \label{eq:youtput}
	y=G^\top \nabla H(x) = G_2^\top D \Phi(x_a),
\end{equation}
where 
\begequarrs
D & := & \lef[{cccc} d_{n-m+1} & 0 & \dots & 0 \\ 0 & d_{n-m+2} & \dots & 0 \\ \vdots & \vdots & \vdots & \vdots \\ 0 & 0 & \dots & d_n \rig]>0 \\
\Phi(x_a) & :=  & \lef[{c} \phi'_{n-m+1}(x_{n-m+1}) \\ \vdots \\ \phi'(x_n) \rig].
\endequarrs
A corollary of the theorem of Hill and Moylan \cite{VAN} is that, if Assumption \ref{as:H} (i) holds, the system \eqref{systform}, \eqref{eq:youtput} defines a passive mapping $u \mapsto y$ with storage function $H(x)$.

To operate the system at a non--zero equilibrium it is necessary to shift the minimum of the storage function and define the passivity property between the incremental input and the output error. Towards this end, we
recall Proposition 1 of \cite{JAYetal} and state it as a lemma below.
\begin{lemma} 
\lab{lem1}
Consider the incremental model of the system \eqref{systform}, \eqref{eq:youtput}
\begin{equation} \label{eq:IncrModel}
	\begin{aligned}
		\dot x &= f(x) + G u^{\star} + G \tilde{u},  \\
		e &= G_2^\top D \tilde {\Phi}(x_a),
	\end{aligned}
\end{equation}
where $\tilde{u} = u - u^{\star}$ is the incremental input. Under Assumptions \ref{as:g}--\ref{as:H} the mapping $\tilde u \mapsto e$  is passive with storage function $U: \mathbb{R}^n\to \mathbb{R}_{\geq 0}$ given by
\begin{equation}
\lab{h0}
			U(x)=H(x)-x_u^\top \nabla H_u^*- x_a^\top D\Phi^\star+k,
\end{equation}
where $k$ is a constant that ensures $U(0)=0$. More precisely,
\begequ
\lab{doth0}
\dot{U} = -Q(x) + e^\top \tilde{u},
\endequ
where $Q(x)$ is defined in Assumption \ref{as:H} (ii).
\end{lemma}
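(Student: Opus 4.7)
The strategy is a direct Lyapunov-style computation of $\dot U$ along trajectories of \eqref{eq:IncrModel}, using the separable structure of $H$ together with the equilibrium condition. First, I would compute the gradient of the candidate storage function. Because $H(x)=H_u(x_u)+H_a(x_a)$ with $\nabla H_a(x_a)=D\Phi(x_a)$, and because $U$ is obtained from $H$ by subtracting linear terms in $x_u$ and $x_a$, the gradient satisfies
\begin{equation*}
\nabla U(x) \;=\; \nabla H(x)-\begin{bmatrix}\nabla H_u^\star\\ D\Phi^\star\end{bmatrix}\;=\;\nabla H(x)-\nabla H(x^\star),
\end{equation*}
which is precisely the quantity appearing in Assumption \ref{as:H}(ii). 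This is the whole point of the shift by linear terms in \eqref{h0}, and it makes the passive output error $e$ show up naturally.

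Next I would use the equilibrium identity. Assumption \ref{as:xstar} together with \eqref{ustar} imply $f(x^\star)+Gu^\star=\mathbf{0}_n$ (the un--actuated rows of $f^\star$ vanish by \eqref{eqset}, and $u^\star$ was constructed to cancel the actuated rows through $G$). Hence along \eqref{eq:IncrModel},
\begin{equation*}
\dot x \;=\; f(x)-f(x^\star)+G\tilde u \;=\; \tilde f(x)+G\tilde u .
\end{equation*}
Differentiating $U$ and substituting,
\begin{equation*}
\dot U \;=\; [\nabla H(x)-\nabla H^\star]^\top \tilde f(x) \;+\; [\nabla H(x)-\nabla H^\star]^\top G\tilde u .
\end{equation*}
By Assumption \ref{as:H}(ii) the first term equals $-Q(x)$. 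For the second term, use Assumption \ref{as:g} and the fact that the first $n-m$ components of $\nabla H_a(x_a)$ vanish (since $H_a$ depends only on $x_a$), hence $G^\top\nabla H(x)=G_2^\top D\Phi(x_a)$ and, analogously at $x^\star$, $G^\top\nabla H^\star=G_2^\top D\Phi^\star$, so their difference is exactly $e=G_2^\top D\tilde\Phi(x_a)$. This yields \eqref{doth0}.

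Finally, I would justify that $U\ge 0$ so that it is a bona fide storage function. Assumption \ref{as:H}(iv) tells us that $H_u$ and each $\phi_i$ are convex, hence $H$ and therefore $U$ (which differs from $H$ by an affine term) is convex. The computation above gives $\nabla U(x^\star)=\nabla H(x^\star)-\nabla H(x^\star)=0$, so $x^\star$ is the unique global minimum of $U$; the constant $k$ merely normalizes the value of $U$ at the reference point mentioned in the statement, and positivity away from the minimum then follows from convexity. The main substantive step is the gradient identity $\nabla U=\nabla H-\nabla H^\star$, everything else is a one-line substitution; there is no real obstacle, the result is essentially a restatement of Proposition~1 of \cite{JAYetal} specialized to the present $G$ and $H$.
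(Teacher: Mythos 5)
Your proof is correct, and it is worth noting that the paper itself gives no proof of Lemma \ref{lem1}: it simply recalls Proposition 1 of \cite{JAYetal}. Your computation is exactly the argument that citation stands in for, specialized to the present structure: the Bregman-type shift in \eqref{h0} gives $\nabla U(x)=\nabla H(x)-\nabla H(x^\star)$ because $\nabla H_a(x_a)=D\Phi(x_a)$; the equilibrium relation $f^\star+Gu^\star=\mathbf{0}_n$ (Assumption \ref{as:xstar} for the un--actuated rows, the definition of $u^\star$ for the actuated ones) turns the drift into $\tilde f(x)$; Assumption \ref{as:H}(ii) supplies $-Q(x)$; and the zero--row structure of $G$ collapses $G^\top[\nabla H(x)-\nabla H^\star]$ to $e=G_2^\top D\tilde\Phi(x_a)$, yielding \eqref{doth0}. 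Two small points of hygiene. First, with the normalization stated in the lemma ($k$ such that $U(0)=0$) convexity only gives $U(x)\ge U(x^\star)$, which need not be nonnegative since the minimizer is $x^\star\neq 0$; your convexity-plus-$\nabla U(x^\star)=0$ argument actually establishes nonnegativity for the normalization $U(x^\star)=0$ (the shifted storage of \cite{JAYetal}), which is what a storage function requires, so treat the $U(0)=0$ clause as a typo rather than building on it. Second, without strict convexity $x^\star$ need not be the \emph{unique} global minimizer, but uniqueness plays no role in the passivity claim, so this does not affect the proof.
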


One of the main interests of passive systems is that they can be globally stabilized with PI controls (with arbitrary positive definite gains). This well--known fact is stated in the lemma below, whose proof is given to streamline the presentation of our main result.

\begin{lemma} 
\lab{lem2}
Consider the system \eqref{systform} verifying Assumptions \ref{as:g}--\ref{as:H} in closed--loop with the PI--PBC
\begequ
\lab{u}
\begin{aligned}
                 e&=G_2^\top D \widetilde {\Phi}(x_a)\\
                 \dot{z} &= - K_I e\\
		u &= - K_P e + z.
	\end{aligned}
\end{equation}
{\em For all} positive definite gain matrices $K_P \in \rea^{m \times m}$ and $K_I \in \rea^{m \times m}$ all trajectories are bounded, the equilibrium point $(x,z)=(x^*,u^*)$ is {\em globally stable} (in the sense of Lyapunov) and the augmented error signal
\begequ
\lab{ea}
e_a:=\lef[{c} Q(x) \\ e \rig],
\endequ
where $Q(x)$ is defined in Assumption \ref{as:H} (ii), verifies
\begequ
\lab{limea}
\lim_{t \to \infty} e_a(t)=0.
\endequ
Moreover, if $e_a$ is a detectable output for the closed--loop system then the equilibrium point is {\em asymptotically} stable.
\end{lemma}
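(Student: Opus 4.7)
The plan is the canonical passivity--plus--integrator Lyapunov argument, exploiting Lemma~\ref{lem1} so that the cross terms introduced by the PI feedback cancel by construction. First I would shift the controller state by setting $\tilde z := z - u^{\star}$, so that $\dot{\tilde z} = -K_I e$ and the incremental input $\tilde u := u - u^{\star}$ becomes $\tilde u = -K_P e + \tilde z$; the target equilibrium $(x^{\star},u^{\star})$ is now the origin of the $(x,\tilde z)$ coordinates. As Lyapunov candidate I would take
\[
W(x,\tilde z) \;=\; U(x) \;+\; \tfrac{1}{2}\,\tilde z^\top K_I^{-1}\tilde z,
\]
with $U$ from \eqref{h0}. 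Lemma~\ref{lem1} has already arranged that $U$ is a valid storage function for the incremental model, and $\nabla U(x^{\star})=0$ together with Assumption~\ref{as:H}(iv) (convexity of $H_u$ and of each $\phi_i$) makes $U$ convex with minimum at $x^{\star}$, so $W$ has a minimum at $(x^{\star},0)$.

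Next I would differentiate along closed--loop trajectories. Using \eqref{doth0} for $\dot U$, and $\tilde z^\top K_I^{-1}\dot{\tilde z}=-\tilde z^\top e$, the substitution $\tilde u=-K_P e+\tilde z$ produces an $e^\top\tilde z$ term that cancels the one coming from $U$, leaving
\[
\dot W \;=\; -Q(x)\;+\;e^\top(-K_P e+\tilde z)\;-\;\tilde z^\top e \;=\; -Q(x)-e^\top K_P e \;\le\; 0,
\]
since $Q\ge 0$ by Assumption~\ref{as:H}(ii) and $K_P=K_P^\top>0$. Standard Lyapunov arguments then yield boundedness of $(x(t),z(t))$ and global stability of $(x^{\star},u^{\star})$.

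For the convergence statement \eqref{limea}, since the closed-loop system is autonomous I would invoke LaSalle's invariance principle: trajectories approach the largest invariant set contained in $\{\dot W=0\}$, which by the displayed expression above is precisely $\{Q(x)=0,\,e=0\}=\{e_a=0\}$. This delivers $\lim_{t\to\infty}e_a(t)=0$. For the final claim, if $e_a$ is a detectable output for the closed-loop dynamics then any trajectory remaining in $\{e_a\equiv 0\}$ must coincide with the equilibrium, so LaSalle upgrades Lyapunov stability to asymptotic stability of $(x^{\star},u^{\star})$.

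The main obstacle is not algebraic---the cross terms cancel by design of the controller---but rather in making $W$ a bona fide \emph{global} Lyapunov function. Convexity of $H_u$ and $\phi_i$ is assumed but not strict convexity, so $U$ could in principle have a non--isolated minimum or fail to be radially unbounded; one must either rely on compactness of the sublevel sets of $W$ (which typically holds in the physical examples of Section~\ref{sec6}) or localise the stability claim. Once the $\omega$--limit set is confined to $\{e_a=0\}$, the detectability hypothesis is exactly what is needed to single out the equilibrium among the invariant solutions.
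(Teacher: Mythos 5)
Your proposal is correct and follows essentially the same route as the paper: shift $\tilde z = z - u^{\star}$, take $W = U + \tfrac{1}{2}\tilde z^\top K_I^{-1}\tilde z$ (the paper's $\Lambda_I$ with the choice $\Lambda_I = K_I^{-1}$), obtain $\dot W = -Q(x) - e^\top K_P e$, and conclude by standard Lyapunov/LaSalle arguments plus detectability. Your added remarks on LaSalle and on the non-strict convexity/radial-unboundedness caveat simply make explicit what the paper compresses into ``standard Lyapunov arguments.''
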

\begin{proof}
Defining $\tilde z:=z-u^*$ the last two equations of the controller \eqref{u} may be written in the form
\begequ
\lab{u1}
\begin{aligned}
                 \dot {\tilde z} &= - K_I e\\
		\tilde u &= - K_P e + \tilde z.
	\end{aligned}
\end{equation}
Consider the Lyapunov function candidate
\begequ
\lab{w}
	W(\tilde{z},x) = U(x) + \frac{1}{2}\tilde{z}^\top \Lambda_I\tilde{z},
\endequ
where $\Lambda_I>0$. The time derivative of the Lyapunov function along the trajectories of the closed--loop system is
\begin{equation} \label{dotw}
	\begin{aligned}
		\dot{W} &= -Q(x) + e^\top \tilde{u} + \tilde{z}^\top \Lambda_I \dot{\tilde{z}} \\
		&= -Q(x) - e^\top K_P e+ \tilde{z}^\top e - \tilde{z}^\top \Lambda_I K_I e.
	\end{aligned}
\end{equation}
Setting $ \Lambda_I= K_I^{-1}$ yields
$$
\dot{W} = -Q(x) - e^\top K_P e.
$$
The proof is complete invoking standard Lyapunov arguments \cite{KHA}.
\end{proof}
%
\section{THE ROBUST PI--PBC}
\lab{sec4}	
%
As indicated in R4 of Subsection \ref{subsec23} the matrix $D$ is unknown. Hence, the error signal $e$ cannot be constructed and the PI--PBC  \eqref{u} is not implementable.	
This motivates our main result given below.
\begin{proposition} \lab{pro1}
Consider system \eqref{systform} verifying Assumptions 1--3 in closed--loop with the robust PI--PBC
\begin{equation}\lab{unew}
	\begin{aligned}
		u &= - K_P  \widetilde{\Phi}(x_a) + z \\
		\dot{z} &= - K_I  \widetilde{\Phi}(x_a),
	\end{aligned}
\end{equation}
with the controller gains
\begequarr
\nonumber
K_P & = & G_2^{-1} \Gamma_P \\
K_I & = & G_2^{-1} \Gamma_I.
\lab{gaipi}
\endequarr
{\em For all} diagonal, positive definite matrices  $\Gamma_P \in \mathbb{R}^{m \times m}$ and $\Gamma_I \in \mathbb{R}^{m \times m}$  we have the following.
\begite
\item[(i)] All trajectories are bounded and the equilibrium point $(x,z)=(x^*,u^*)$ is {\em globally stable} (in the sense of Lyapunov).
\item[(ii)] The augmented error signal $e_a$ defined in \eqref{ea} verifies \eqref{limea}.  
\item[(iii)] If $e_a$ is a detectable output for the closed--loop system then the equilibrium point is globally {\em asymptotically} stable.
\endite
\end{proposition}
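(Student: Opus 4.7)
The plan is to mimic the Lyapunov argument of Lemma \ref{lem2}, but with a modified storage weighting that absorbs the unknown diagonal matrix $D$. The controller in \eqref{unew} feeds back $\widetilde{\Phi}(x_a)$ rather than the true passive output $e = G_2^\top D\widetilde{\Phi}(x_a)$, so the computation of $\dot W$ will produce cross terms whose coefficients involve the unknown $D$. The key observation is that because both $\Gamma_I$ and $D$ are diagonal, the matrix $D\Gamma_I^{-1}$ is symmetric positive definite, and this is precisely what allows us to pick a symmetric positive definite weighting for $\tilde z$ that makes the cross terms cancel.

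Concretely, I would let $\tilde z := z - u^\star$, rewrite \eqref{unew} as
\begin{equation*}
\dot{\tilde z} = - G_2^{-1}\Gamma_I \widetilde\Phi(x_a), \qquad \tilde u = -G_2^{-1}\Gamma_P \widetilde\Phi(x_a) + \tilde z,
\end{equation*}
and consider the candidate
\begin{equation*}
W(\tilde z, x) = U(x) + \tfrac{1}{2}\tilde z^\top \Lambda_I \tilde z, \qquad \Lambda_I := G_2^\top D \Gamma_I^{-1} G_2 .
\end{equation*}
Since $D$ and $\Gamma_I$ are diagonal positive definite they commute and $D\Gamma_I^{-1}>0$, so $\Lambda_I$ is symmetric positive definite and $W$ is a bona fide Lyapunov function (positive definite and radially unbounded by convexity of $H$, established in Lemma \ref{lem1}). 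By Lemma \ref{lem1}, $\dot U = -Q(x) + e^\top \tilde u$. Substituting $e = G_2^\top D\widetilde\Phi(x_a)$ and the expression for $\tilde u$ above yields
\begin{equation*}
e^\top \tilde u = -\widetilde\Phi(x_a)^\top D\,\Gamma_P\, \widetilde\Phi(x_a) + \tilde z^\top G_2^\top D\widetilde\Phi(x_a),
\end{equation*}
while the chosen $\Lambda_I$ gives $\tilde z^\top \Lambda_I \dot{\tilde z} = -\tilde z^\top G_2^\top D \widetilde\Phi(x_a)$. The cross terms cancel exactly, leaving
\begin{equation*}
\dot W = -Q(x) - \widetilde\Phi(x_a)^\top D\,\Gamma_P\, \widetilde\Phi(x_a) \leq 0,
\end{equation*}
because $D\Gamma_P$ is again diagonal positive definite.

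From here (i) is immediate: $\dot W \le 0$ together with the positive definiteness and radial unboundedness of $W$ gives boundedness of trajectories and Lyapunov stability of $(x^\star, u^\star)$. For (ii), the signal $\dot W$ is uniformly continuous along trajectories (standard: $x$, $\tilde z$ are bounded, $f$ and $\Phi$ smooth), so Barb\u{a}lat's lemma yields $Q(x(t))\to 0$ and $\widetilde\Phi(x_a(t)) \to 0$; since $G_2$ is invertible and $D$ positive diagonal, the latter implies $e(t)\to 0$, hence $e_a(t)\to 0$. Part (iii) then follows by the usual detectability argument: $e_a \to 0$ combined with detectability of $e_a$ forces the full state of the closed loop to converge to the equilibrium, upgrading Lyapunov stability to asymptotic stability.

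The main conceptual obstacle is identifying the correct weighting $\Lambda_I$; the diagonal hypothesis on $\Gamma_I$ (and on $D$, inherited from \eqref{h}) is exactly what is needed to ensure $D\Gamma_I^{-1}$ is symmetric positive definite, so that $\Lambda_I$ is a legitimate Lyapunov weighting. A generic (non-diagonal) positive definite $\Gamma_I$ would destroy this symmetry and the cross-term cancellation would require knowledge of $D$, which is precisely what we want to avoid—this explains the diagonal restriction in the statement. Everything else is a routine extension of the argument in Lemma \ref{lem2}.
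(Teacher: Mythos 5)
Your proof is correct and follows essentially the same route as the paper: the same Lyapunov function with the weighting $\Lambda_I = G_2^\top D\Gamma_I^{-1}G_2$, the same cross-term cancellation, and a dissipation term $\widetilde\Phi(x_a)^\top D\Gamma_P\widetilde\Phi(x_a)$ that coincides with the paper's $e^\top\Lambda_P e$ after substituting $\Lambda_P = G_2^{-1}\Gamma_P D^{-1}G_2^{-\top}$. The only cosmetic difference is that you expand the proportional term directly in $\widetilde\Phi$ rather than introducing $\Lambda_P$ and invoking Sylvester's law of inertia.
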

\begin{proof}
Some simple manipulations prove that
\begin{equation} \label{lamp}
K_P \widetilde{\Phi}(x_a) =  G_2^{-1} \Gamma_P D^{-1} G_2^{-\top}  G_2^\top D  \widetilde{\Phi}(x_a)= \Lambda_P e,
\end{equation}
where we defined the matrix
\begequ
\lab{lamp1}
\Lambda_P := G_2^{-1} \Gamma_P D^{-1} G_2^{-\top},
\endequ
and used the definition of $e$ in \eqref{u}. Invoking Sylvester's Law of Inertia \cite{LANTIS}, and the fact that $\Gamma_P$ and $D$ are diagonal and positive definite, we have that $\Lambda_P>0$. 

Next choose
\begequ
\lab{lami1}
\Lambda_I := G_2^\top D \Gamma_I^{-1} G_2,
\endequ
that is, also, positive definite for all diagonal, positive definite $\Gamma_I$. Then
\begin{equation} \label{lami}
	\Lambda_I K_I \widetilde{\Phi}(x) = G_2^\top D \widetilde{\Phi}(x_a) =e.
\end{equation}
Replacing \eqref{lamp} and \eqref{lami} in the controller equations yields
\begin{equation*}
	\begin{aligned}
		\tilde u &= - \Lambda_P e + \tilde z \\
		\dot{\tilde z} &= - \Lambda_I^{-1} e.
	\end{aligned}
\end{equation*}
Consequently, the time derivative of the Lyapunov function \eqref{dotw} becomes now
\begequ
\lab{dotw1}
	\dot{W} = -Q(x) - e^\top \Lambda_P e,
\endequ
completing the proof.
\end{proof}

To obtain an implementable version of the robust PI--PBC it was necessary to carry--out two tasks. First, to make the damping injection, introduced by the proportional term, function of the unknown matrix $D$. Indeed, replacing \eqref{lamp1} in \eqref{lamp} we get
$$
K_P \widetilde{\Phi}(x)=G_2^{-1} \Gamma_P D^{-1} G_2^{-\top}e.
$$
Second, make the gain $\Lambda_I$ of the Lyapunov function \eqref{w} also a function of $D$---see \eqref{lami1}. 

An important observation is that, even though the controller only requires measurement of the actuated terms of the state $x_a$,  it achieves regulation of the full state vector.
%
\section{ADDITIONAL REMARKS ON THE PI--PBC}
\lab{sec5}	
%
%
In this section we explain how to proceed when $G$ is not of the form \eqref{g}, discuss the reasons that stymie the use of adaptation and the inability to state a robustness result based on continuity and approximate prior knowledge of the matrix $D$.
\subsection{General $G$ (with $n-m$ zero rows)}
\lab{subsec51}	

Instrumental to design the robust PI--PBC was the particular form of $H(x)$ defined in Assumption \ref{as:H} (iii). In view of the construction of the robust PI--PBC, it is clear that if $G$ is not of the form  \eqref{g} the assumption must be modified redefining the actuated and un--actuated coordinates. 

To avoid cluttering the notation we will explain the procedure only for the case when $n=3$ and $m=2$---the general case follows {\em verbatim}. Assume, furthermore, that $G$ is of the form
$$
G=\lef[{c} g_1^\top \\ \mathbf{0}_{1 \times 2} \\  g_3^\top \rig].
$$
The form of $H(x)$ given in Assumption \ref{as:H} (iii) must be, accordingly, modified to
$$
H(x)=H_u(x_2)+d_1 \phi(x_1) + d_3 \phi(x_3).
$$
In this case the passive output $e$ for the incremental model becomes
$$
G^\top [\nabla H(x)-\nabla H(x^*)]=G_s \lef[{cc} d_1 & 0 \\ 0 & d_3 \rig] \lef[{c} \widetilde{\Phi}_1(x_1) \\  \widetilde{\Phi}_3(x_3) \rig].
$$
where
$$
G_s:= \lef[{ccc} g_1& | & g_3 \rig].
$$
The robust PI--PBC is given by
\begin{equation*}
	\begin{aligned}
		u &= - G_s^{-1} \Gamma_P \lef[{c} \widetilde{\Phi}_1(x_1) \\  \widetilde{\Phi}_3(x_3) \rig] + z \\
		\dot{z} &=  - G_s^{-1} \Gamma_I \lef[{c} \widetilde{\Phi}_1(x_1) \\  \widetilde{\Phi}_3(x_3) \rig],
	\end{aligned}
\end{equation*}
where $\Gamma_P$ and $\Gamma_I$ are arbitrary, diagonal, positive definite matrices.

Before closing this subsection we remark that our construction {\em critically} relies on the assumption of existence of $n-m$ zero rows in $G$. Indeed, it is possible to show that if this is not the case, even assuming $H(x)$ of the form
$$
H(x)=\sum_{i=1}^n d_i \phi_i(x_i)
$$
and defining $D_n = \diag\{d_i\}$, it is not possible to find an $m \times m$ positive definite matrix $\Lambda$, which will depend on $D_n$, such that the matrix $\Lambda G^\top D_n$ is {\em independent} of $D_n$. The fact that this is {\em not possible} for all matrices $G$ is obvious considering the counterexample $G=\col(1,1)$. Hence, the assumption of existence of $n-m$ zero rows in $G$ is {\em necessary} to solve the problem. 
%
\subsection{Difficulties for adaptation}
\lab{subsec52}	
A natural alternative to the robust PI--PBC presented above is to assume a parametrisation of $f(x)$ and try to estimate this parameters or, in a direct approach, estimate the matrix $D$ that defines the passive output. The indirect approach, as is well--known,  relies on parameter convergence that requires persistency of excitation---a property that is not satisfied in the regulation tasks where PI control is effective.

Let us see what are the difficulties for the application of a direct adaptation approach. Towards this end, we propose the adaptive PI--PBC
$$
\begin{aligned}
                 \dot{\hat D} &=F(x,z)\\
                 \hat e&=\Gstas^\top  \hat D\; \widetilde {\Phi}(x_a)\\
                 \dot{z} &= - K_I \hat e\\
                   u &= - K_P \hat e + z,
	\end{aligned}
$$
where the parameter adaptation law $F:\rea^n \times \rea^m \to \rea^{m \times m}$ is to be defined.\footnote{Notice that, in contrast to the robust PI--PBC, we have assumed that the full state is measurable.} Defining $\tilde e:=\hat e-e$ the last two equations of the controller may be written in the form
$$
\begin{aligned}
                 \dot {\tilde z} &= - K_I( e+\tilde e)\\
		\tilde u &= - K_P (e+\tilde e) + \tilde z.
	\end{aligned}
$$
The time derivative of the Lyapunov function \eqref{w} with $ \Lambda_I= K_I^{-1}$ is now
$$
	\begin{aligned}
		\dot{W} &= -Q(x) - \hat e^\top K_P \hat e - \tilde{u}^\top \tilde e\\
		 &= -Q(x) - \hat e^\top K_P \hat e - \tilde{u}^\top \Gstas^\top  \tilde D \widetilde {\Phi}(x_a)
	\end{aligned}
$$
where we underscore the presence of the last right hand term. If $\tilde u$ were known the standard procedure of augmenting the Lyapunov function with a term $\mbox{trace} (\tilde D^\top \tilde D)$ and cancelling the sign--indefinite term with a suitable choice of $F(x,z)$ would do the job. Alas, $u^*$ is not known, hence this approach is not feasible. 

Adding an adaptation for the constant $u^*$ is also not a trivial task, because of the bilinear nature of the joint estimation problem.
%
\subsection{Comments on robustness based on continuity}
\lab{subsec53}	
The availability of a {\em bona fide} Lyapunov function for the known parameters PI--PBC, {\em i.e.}, $W(x,\tilde z)$, suggests that stability will be preserved if the matrix $D$ is replaced by a ``good", {\em constant} estimate of it, say $D_0$. More precisely, it is expected that replacing the controller \eqref{u} by
$$
\begin{aligned}
                 e_0 &=\Gstas^\top D_0 \widetilde {\Phi}(x_a)\\
            \dot{z} &= - K_I  e_0\\
	      		u &= - K_P e_0 + z,
	\end{aligned}
$$
where
$$
D=D_0 + \Delta,\;\Delta:=\diag\{\delta_i\}
$$
would ensure stability if $|\col(\delta_i)|$ is sufficiently small. Unfortunately, since the Lyapunov function is {\em not strict}, this conjecture cannot be analytically validated. Indeed, in this case the time derivative of the Lyapunov function \eqref{w} with $ \Lambda_I= K_I^{-1}$ is now
$$
	\begin{aligned}
		\dot{W} &= -Q(x) + e^\top \tilde u - \tilde{z}^\top (e-\Gstas^\top \Delta \widetilde {\Phi}(x_a))\\
		 &= -Q(x) - e_0^\top K_P e_0 - (K_P e_0 - \tilde z)^\top \Gstas^\top  \Delta \widetilde \Phi(x_a).
	\end{aligned}
$$
While the term $e_0^\top K_P \Gstas^\top  \Delta \widetilde \Phi(x_a)$ can be dominated for ``small" $\Delta$, there is no way we can dominate the remaining term $ \tilde z^\top K_I \Gstas^\top  \Delta \widetilde \Phi(x_a)$ and the Lyapunov analysis cannot be completed with standard techniques.

This unfortunate situation does not mean, of course, that a continuity result of this type cannot be established. It simply reveals our inability to do it with the tools used to analyze the ideal case.
%
\section{EXAMPLES}
\lab{sec6}	
%
Two examples of physical systems  which are amenable for robust PI--PBC are given below. Attention is concentrated on the verification of Assumption \ref{as:H}. Hence, unless otherwise indicated, Assumption \ref{as:g} is not imposed.
\subsection{A class of port--Hamiltonian system}
\lab{subsec61}	
%
\begin{proposition}
The pH system
\begin{equation}\label{phsys}
\dot{x}  =   ({\cal J} -{\cal R} ) \nabla H(x) + Gu
\end{equation}
with {\em constant} interconnection  ${\cal J}=-{\cal J}^\top$ and damping ${\cal R}={\cal R}^\top \geq 0$ matrices satisfies Assumption \ref{as:H} (i) and (ii).
\end{proposition}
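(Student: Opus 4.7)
The proof is essentially a direct computation exploiting (a) the skew-symmetry of $\mathcal{J}$, which annihilates quadratic forms it induces, and (b) the fact that both $\mathcal{J}$ and $\mathcal{R}$ are \emph{constant}, so gradients of $H$ factor cleanly out of the vector field.

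For (i), I would plug $f(x) = (\mathcal{J}-\mathcal{R})\nabla H(x)$ into $[\nabla H(x)]^\top f(x)$. The resulting quadratic form in $\nabla H(x)$ splits into a piece driven by $\mathcal{J}$ and a piece driven by $\mathcal{R}$. Skew-symmetry of $\mathcal{J}$ kills the first piece, and positive semi-definiteness of $\mathcal{R}$ makes the second piece non-positive, so
\begin{equation*}
[\nabla H(x)]^\top f(x) = -[\nabla H(x)]^\top \mathcal{R}\, \nabla H(x) \le 0,
\end{equation*}
which is exactly (i).

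For (ii), the key observation is that because $\mathcal{J}$ and $\mathcal{R}$ are \emph{constant}, the incremental vector field has the product form
\begin{equation*}
\tilde f(x) = f(x) - f(x^\star) = (\mathcal{J}-\mathcal{R})\bigl[\nabla H(x)-\nabla H(x^\star)\bigr].
\end{equation*}
Hence the quantity of interest is a quadratic form in the \emph{incremental} gradient $\nabla H(x)-\nabla H(x^\star)$, and exactly the same skew-symmetry/positivity argument as in (i) gives
\begin{equation*}
[\nabla H(x)-\nabla H(x^\star)]^\top \tilde f(x) = -[\nabla H(x)-\nabla H(x^\star)]^\top \mathcal{R}\bigl[\nabla H(x)-\nabla H(x^\star)\bigr] \le 0,
\end{equation*}
so the quadratic form $Q(x)$ appearing in Assumption 3(ii) is simply the $\mathcal{R}$-weighted squared norm of the incremental gradient.

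There is no real obstacle here: the only subtlety worth flagging is that the decomposition of $\tilde f(x)$ into $(\mathcal{J}-\mathcal{R})[\nabla H(x)-\nabla H(x^\star)]$ relies on $\mathcal{J}$ and $\mathcal{R}$ being state-independent; if either were allowed to depend on $x$ the cancellation would fail and one would need additional structure (e.g.\ a matching Poisson-type identity) to recover (ii). With the stated constancy hypothesis, both items of Assumption 3 follow in one line each.
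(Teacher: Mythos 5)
Your proposal is correct and follows exactly the paper's argument: item (i) by skew-symmetry of $\mathcal{J}$ and $\mathcal{R}\geq 0$, and item (ii) by noting that constancy of $\mathcal{J}-\mathcal{R}$ gives $\tilde f(x)=(\mathcal{J}-\mathcal{R})[\nabla H(x)-\nabla H(x^\star)]$, so the same quadratic-form computation applies to the incremental gradient. No differences worth noting.
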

\begin{proof}
Assumption  \ref{as:H}  (i) is obviously satisfied because
$$
 \nabla^\top H(x) ({\cal J} -{\cal R} )\nabla H(x) =- \nabla^\top H(x) {\cal R} \nabla H(x) \leq 0.
$$
Similarly, Assumption  \ref{as:H}  (ii) holds since
\begequarrs
&&[ \nabla H(x)- \nabla H(x^*)]^\top  ({\cal J} -{\cal R} )[\nabla H(x)- \nabla H(x^*)] =\\
&& -[ \nabla H(x)- \nabla H(x^*)]^\top {\cal R} [\nabla H(x)- \nabla H(x^*)]\leq 0.
\endequarrs
\end{proof}
\subsection{Application to temperature regulation}
\lab{subsec62}	
%
In this subsection we design a robust PI--PBC for the temperature regulation of a class of thermal systems---the so--called, rapid thermal processes.
\subsubsection{System Description}
%
Similarly to \cite{ebert2004model, schaper1992modeling} we consider the following model of rapid thermal processes
\begin{equation} \label{tsys}
	\begin{aligned}
		\dot{T} = A_1 \left[\Psi(T)-\Psi(T_{rad})\right]+A_2 \left(T-T_{conv} \right) +Gu,   
	\end{aligned}
\end{equation} 
where $T\in \mathbb{R}^n_{\geq 0}$  represents the vector of temperatures, ${\Psi(T):=\col(T_i^4)}$ and  $T_{rad},\;T_{conv} \in \mathbb{R}^n_{\geq 0}$  are, respectively, the vectors of temperatures related to the radiation heat emission from environment and the convection air flows. 
The constant matrices  $A_1,A_2\in\mathbb{R}^{n\times n}$ are {\em Hurwitz}  and contain the radiation and the convection heat transfer coefficients. Also, the entries of  $G\in\mathbb{R}^{n\times m}$ correspond to the heat transfer coefficients of the heating elements. Finally, $u\in \mathbb{R}^m$ is the controlled power applied to the heating elements. 

In the model above, as in \cite{schaper1992modeling}, it is considered that  \eqref{tsys}  is heated almost uniformly so that the contribution of energy from conduction is too small with respect to the radiation transfer. Hence, the conduction heat transfer is neglected. 

To simplify the notation we re--write the system \eqref{tsys} in the form 
$$
\dot{T} = A_1 \Psi(T)+A_2 T+E +Gu,
$$
where
$$
E:= -  A_1 \Psi(T_{rad}) - A_2 T_{conv}. 
$$
Unlike  $A_1,A_2$ and $E$ that are highly uncertain, the input matrix $G$---that is defined by the induced heat flow---can be precisely identified. The {\em control objective} is then to design a robust PI, {\em i.e.}, that does not require the knowledge of  the uncertain parameters, to regulate the process around some desired temperature, which is {\em not equal} to the open--loop equilibrium, but belongs to the assignable equilibrium set, that is,
\begequ
\lab{tsta}
T^\star \in \left\{T \in \rea^n_{\geq 0}\ |\ G^\perp [A_1 \Psi(T)+A_2 T + E] = 0 \right\},
\endequ
where $G^\perp \in \rea^{(n-m) \times n}$ is a full-rank left-annihilator of $G$.

To place the problem in the context of Proposition 1 we first shift the equilibrium of the open--loop system to the origin and then proceed to verify Assumption \ref{as:H}. For, we  introduce the standard change of coordinates
$$
x=T-\bar T,
$$
where $\bar T$ is the open--loop equilibrium that satisfies
\begequ
\lab{opelooequ}
A_1 \Psi(\bar T)+A_2 \bar T+E=0.
\endequ
Thus, the system \eqref{tsys} in the new coordinates takes the form \eqref{systform} with
\begin{equation}
\label{vecfie}
	f(x):=  A_1 \Psi(x+\bar T)+A_2(x+\bar T)+E,
\end{equation}
Associated to the desired temperature $T^\star$ we define the equilibrium to be stabilised
\begequ
\lab{xsta}
x^\star := T^\star - \bar T.
\endequ
\subsubsection{Passivity of the thermal system}

The lemma below identifies conditions under which the system  \eqref{tsys} satisfies Assumption \ref{as:H} {\em without} imposing Assumption \ref{as:g}, that is, avoiding the partition of the coordinates into actuated and un--actuated. Towards this end, the following assumption is needed.

\begin{assumption}
\lab{ass4}
The matrix $A_1$ is {\em diagonally stable} \cite{KASBHA}. That is, there exists $P \in \rea^{n \times n}$, $P=\diag\{p_i\}>0$ such that
\begin{equation} \label{eq:lmi}
PA_1+A_1^\top P =:-2S < 0.
\end{equation}
Moreover, the matrix $A_2$ verifies
\begequ
\lab{moncon}
A_2^\top P \diag\{T_i^3\}+  \diag\{T_i^3\} P A_2  \leq 0.
\endequ
\end{assumption}
\vspace{0.2cm}

Conditions for diagonal stability of a matrix have been studied intensively, see \cite{KASBHA} for a survey. Necessary and sufficient conditions were first reported in \cite{BARetal}---see also \cite{SHOetal} for a simple proof. A sufficient condition, given  in \cite{Farina},  is that it is a Metzler matrix (namely, its non diagonal elements are nonnegative).  

Since $A_2$ is Hurwitz, condition \eqref{moncon} is trivially satisfied if $A_2$ is {\em diagonal}, which is the case in some physical examples.

\begin{lemma} \label{lem3}
If Assumption \ref{ass4} holds  the vector field \eqref{vecfie} satisfies Assumption \ref{as:H} with
\begequ
\lab{hh}
H(x)=\sum_{i=1}^n p_i \phi_i(x_i)+k,
\endequ
where
\begequ
\lab{phii}
\phi_i(x_i)=\frac{1}{5}(x_i+\bar T_i)^5  - \Psi_i(\bar T_i) x_i
\endequ
and
$$
k= -\frac{1}{5}\sum_{i=1}^{n}p_i \bar T_i^5.
$$
\end{lemma}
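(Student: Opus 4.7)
The plan is to reduce the verification to two observations: a clean factorization of $\nabla H$ via the open--loop equilibrium identity, and a componentwise mean--value trick that converts the $A_2$ coupling into a quadratic form amenable to \eqref{moncon}.

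First I would unpack the structure. Differentiating \eqref{phii} yields $\phi_i'(x_i) = (x_i+\bar T_i)^4 - \bar T_i^4 = \Psi_i(T) - \Psi_i(\bar T)$ with $T := x + \bar T$, so $\nabla H(x) = P[\Psi(T) - \Psi(\bar T)]$. Using the equilibrium relation \eqref{opelooequ}, the drift \eqref{vecfie} rewrites as $f(x) = A_1[\Psi(T)-\Psi(\bar T)] + A_2 x$. Convexity of each $\phi_i$ on the physical domain $T_i \ge 0$ is immediate from $\phi_i''(x_i) = 4T_i^3 \ge 0$; the constant $k$ is precisely what is needed to enforce $H(0)=0$; and any partition of the index set compatible with the zero--row structure of $G$ (see Subsection \ref{subsec51}) puts $H$ in the separable form required for parts (iii)--(iv) of Assumption \ref{as:H}.

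The heart of the argument is to verify parts (i) and (ii). For (ii), with $V := \Psi(T)-\Psi(T^\star)$, I would compute
\[
[\nabla H(x)-\nabla H(x^\star)]^\top \tilde f(x) = V^\top P A_1 V + V^\top P A_2(T-T^\star).
\]
The first term equals $-V^\top S V \le 0$ by \eqref{eq:lmi}. For the cross term, apply the mean value theorem componentwise to $s \mapsto s^4$: there exist $\xi_i$ between $T_i^\star$ and $T_i$, hence in $\rea_{\ge 0}$, such that $T_i^4 - T_i^{\star 4} = 4\xi_i^3(T_i - T_i^\star)$. Thus $V = 4\diag\{\xi_i^3\}(T-T^\star)$, and symmetrizing the resulting scalar,
\[
V^\top P A_2(T-T^\star) = 2(T-T^\star)^\top \bigl[\diag\{\xi_i^3\} P A_2 + A_2^\top P \diag\{\xi_i^3\}\bigr](T-T^\star) \le 0,
\]
by \eqref{moncon} evaluated at $\xi$. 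Part (i) follows from the identical computation with $T^\star$ replaced by $\bar T$ and $T - T^\star$ by $x$.

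The only real obstacle is this $A_2$ cross term: unlike the $A_1$ contribution, it is not a priori a quadratic form in a single vector, so diagonal stability of $A_2$ alone would not suffice. The mean--value factorization is what makes condition \eqref{moncon}---a pointwise sign condition on $\diag\{T_i^3\} P A_2 + A_2^\top P \diag\{T_i^3\}$ over the nonnegative orthant---both natural and sufficient. Everything else is bookkeeping.
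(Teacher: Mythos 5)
Your proof is correct and follows essentially the same route as the paper: the same factorization $\nabla H(x)=P[\Psi(T)-\Psi(\bar T)]$ (resp.\ $P\tilde\Phi$ for the incremental case), the same rewriting of the drift via \eqref{opelooequ}, the same split into an $A_1$ quadratic form handled by \eqref{eq:lmi} and an $A_2$ cross term handled by \eqref{moncon}, and the same convexity check $\phi_i''=4T_i^3\ge 0$. The only cosmetic difference is that you dispose of the cross term by a componentwise mean value factorization $\Psi(T)-\Psi(T^\star)=4\,\diag\{\xi_i^3\}(T-T^\star)$ with $\xi_i\ge 0$, whereas the paper invokes the monotonicity (in the sense of \cite{PAVetal}) of $h(x)=A_2^\top P\Psi(x)$ --- two phrasings of the same argument.
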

\begin{proof}
Point (iii) of  Assumption \ref{as:H} is trivially satisfied by \eqref{hh}.

We proceed now to prove point (i). Replacing \eqref{phii} in \eqref{hh} and grouping terms yields
$$
H(x) = \frac{1}{5}\sum_{i=1}^{n}p_i (x_i+\bar T_i)^5 - x^\top P \Psi(\bar T)+k,
$$
Now, notice that
$$
\nabla H(x)=P \Phi(x),
$$
where
\begequ
\lab{phi}
\Phi(x):=\Psi(x+\bar T)-\Psi(\bar T).
\endequ
On the other hand, from \eqref{opelooequ} it follows that the systems vector field may be written as
$$
 f(x)  =   A_1 \Phi(x)+A_2 x.
$$
Consequently,
\begequarrs
[\nabla H(x)]^\top f(x,\theta) & = & \Phi^\top(x)P[ A_1 \Phi(x)+A_2 x ]\\
& = &  -\Phi^\top(x) S  \Phi(x)+ \Phi^\top(x) P A_2 x,
\endequarrs
where we have used \eqref{eq:lmi} to obtain the second identity. Now, condition \eqref{moncon} ensures that the function $h:\rea^n \to \rea^n$
$$
h(x):= A_2^\top P\Psi(x), 
$$
is monotonically decreasing \cite{PAVetal}. That is, for all $a,b \in \rea^n$,
$$
[h(a)-h(b)]^\top (a-b) \leq 0.
$$
Consequently,
$$
\Phi^\top(x) P A_2 x=[h(x+\bar T)-h(\bar T)]^\top x \leq 0 
$$
completing the proof of point (i).

To prove point (ii) we notice that
$$
\tilde f(x)=A_1 \tilde \Phi(x) + A_2 \tilde x,
$$
while
$$
\nabla H(x) - \nabla H(x^*)=P \tilde \Phi(x).
$$
Hence, the claim is established invoking the same arguments used above and defining
$$
Q(x)=\tilde \Phi^\top(x) S \tilde \Phi(x).
$$

Finally, the second derivative of the functions $\phi_i(x_i)$ yields
$$
\phi_i''(x_i)=4 (x_i + \bar T_i)^3=4 T_i^3,
$$
which is non--negative because $T_i \geq 0$. Hence, the functions $\phi_i(x_i)$ are convex as requested by condition (iv) of Assumption \ref{as:H}. This completes the proof.\\
\end{proof}

Direct application of Lemma 1 leads to the following. 
\begin{corollary}
If Assumption \ref{ass4} holds, the thermal system \eqref{tsys} defines a passive map $\tilde u \mapsto e$
with storage function $U(x)$, where 
\begequarrs
e & = & G^\top P \tilde \Phi(x) \\
U(x) & = & H(x) - x^\top P \Phi(x^*) - H(x^*) + (x^*)^\top P \Phi(x^*).
\endequarrs
\end{corollary}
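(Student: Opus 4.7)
The plan is to read the corollary off Lemma \ref{lem1} once we insert the structural data supplied by Lemma \ref{lem3}. Lemma \ref{lem3} has already established that the thermal vector field \eqref{vecfie} satisfies Assumption \ref{as:H} with the storage function \eqref{hh}. A one--line computation from \eqref{hh}--\eqref{phii} gives $\nabla H(x) = P\,\Phi(x)$, so the generic Hill--Moylan passive output $G^\top[\nabla H(x)-\nabla H(x^\star)]$ becomes exactly $e = G^\top P\,\tilde\Phi(x)$, matching the claim.

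Next I would construct the shifted storage function. The formula \eqref{h0} was derived under Assumption \ref{as:g}, which is \emph{not} imposed here, but the same recipe works verbatim because the passivity computation only requires $\nabla U(x)=\nabla H(x)-\nabla H(x^\star)$. Concretely, I would set
\[
U(x) = H(x) - x^\top P\,\Phi(x^\star) - H(x^\star) + (x^\star)^\top P\,\Phi(x^\star),
\]
and check directly that $U(x^\star)=0$, $\nabla U(x^\star)=\nabla H(x^\star)-P\Phi(x^\star)=0$, and $\nabla^2 U(x)=\nabla^2 H(x)\succeq 0$, the last inequality being the convexity of the $\phi_i$ proved at the end of Lemma \ref{lem3}. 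Hence $U$ is a legitimate non--negative storage function with its minimum at $x^\star$.

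Finally, I would differentiate $U$ along \eqref{tsys}. Because $T^\star$ lies in the assignable set \eqref{tsta}, the shifted equilibrium \eqref{xsta} satisfies $f(x^\star)+Gu^\star=0$, so $\dot x = \tilde f(x)+G\tilde u$ with $\tilde u:=u-u^\star$. Using $\nabla U(x)=P\,\tilde\Phi(x)$ the chain rule yields
\[
\dot U = [\nabla H(x)-\nabla H(x^\star)]^\top \tilde f(x) + e^\top \tilde u = -Q(x) + e^\top \tilde u,
\]
where the second equality is Assumption \ref{as:H}(ii) as verified in Lemma \ref{lem3}. Dropping the non--positive term $-Q(x)$ and integrating gives the required dissipation inequality, i.e.\ passivity of $\tilde u\mapsto e$ with storage $U$.

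There is no substantive obstacle; the only point worth flagging is that the closed form of $U$ in \eqref{h0} was written under the actuated/un--actuated partition of Assumption \ref{as:g}, so one must verify that discarding that partition does not break the argument of Lemma \ref{lem1}. It does not: the partition was used in \eqref{h0} only to bookkeep the affine correction term component--wise, whereas the passivity computation itself rests solely on Assumption \ref{as:H}(ii), which Lemma \ref{lem3} has already supplied for the thermal system.
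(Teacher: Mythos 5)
Your proposal is correct and follows essentially the paper's own route: the paper states the corollary as a ``direct application of Lemma \ref{lem1}'' once Lemma \ref{lem3} supplies Assumption \ref{as:H} and the identity $\nabla H(x)=P\Phi(x)$, which is exactly the argument you spell out (Bregman-type shift of $H$, $\nabla U = P\tilde\Phi$, $\dot U=-Q(x)+e^\top\tilde u$). Your extra remark that the actuated/un-actuated partition of Assumption \ref{as:g} is not needed for the passivity computation is a correct and useful clarification of why the ``direct application'' is legitimate here.
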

\subsubsection{Robust PI--PBC of the thermal system}
	To present the robust PI--PBC for systems verifying Assumption \ref{as:g} we partition the vector of temperatures into its un--actuated and actuated components
$$
T=\lef[{c} T_u \\ T_a\rig],\;T_u:=\lef[{c} T_1 \\ T_2 \\ \vdots \\ T_{n-m}\rig],\;T_a:=\lef[{c} T_{n-m+1}\\T_{n-m+2}\\ \vdots \\ T_n\rig],
$$
partition $P$ as
\begin{equation*}
	P = \begin{bmatrix} P_1 & \mathbf{0}_{(n-m) \times m} \\ \mathbf{0}_{m\times (n-m)} & D \end{bmatrix},
\end{equation*}
and do the same with the vector function $\Psi(T)$. 

The following proposition is a consequence of Lemma \ref{lem3} and Proposition \ref{pro1}.
\begin{proposition}
\lab{pro2}
Consider the system \eqref{tsys} verifying Assumptions \ref{as:g} and \ref{ass4}. Fix any desired temperature $T^*$ verifying \eqref{tsta} and define the PI--PBC 
$$
\begin{aligned}
		u &= - K_P \tilde {\Psi}_a(T_a) + z \\
		\dot{z} &= - K_I   \tilde \Psi_a(T_a) ,
	\end{aligned}
$$
and the controller gains $K_P$ and  $K_I$ are given by \eqref{gaipi}.  {\em For all} diagonal, positive definite matrices  $\Gamma_P \in \mathbb{R}^{m \times m}$ and $\Gamma_I \in \mathbb{R}^{m \times m}$  all trajectories are bounded and the equilibrium point $(T,z)=(T^*,u^*)$ is {\em globally asymptotically stable}.
\end{proposition}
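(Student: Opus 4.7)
The plan is to invoke Proposition \ref{pro1} directly on the thermal model, using Lemma \ref{lem3} to verify that its hypotheses hold, and then to supply a short detectability argument that promotes the Lyapunov stability delivered by parts (i)--(ii) of Proposition \ref{pro1} to the global asymptotic stability claimed here.

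First, I would observe that Lemma \ref{lem3} furnishes a Lyapunov function $H(x)=\sum_{i=1}^{n} p_i\phi_i(x_i)+k$ of exactly the separable form demanded by Assumption \ref{as:H}(iii); under Assumption \ref{as:g} this splits as $H_u(x_u)+H_a(x_a)$ with unknown diagonal matrix $D=\diag\{p_{n-m+1},\dots,p_n\}$ and known basis functions $\phi_i$ given by \eqref{phii}. A direct differentiation gives $\phi_i'(x_i)=(x_i+\bar T_i)^4-\bar T_i^4=T_i^4-\bar T_i^4$, so the vector $\widetilde{\Phi}(x_a)$ appearing in \eqref{unew} coincides with $\tilde\Psi_a(T_a)$. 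The control law of the present proposition is therefore identical to the robust PI--PBC \eqref{unew} with the gain parametrisation \eqref{gaipi}, and Proposition \ref{pro1}(i)--(ii) applies verbatim: all trajectories are bounded, $(T,z)=(T^\star,u^\star)$ is globally Lyapunov stable, and the augmented error $e_a(t)$ tends to zero.

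To discharge Proposition \ref{pro1}(iii) I need $e_a$ to be a detectable output of the closed loop, meaning that any trajectory along which $e_a\equiv 0$ sits at the equilibrium. Along such a trajectory, $Q(x)=\tilde\Phi^\top(x)\,S\,\tilde\Phi(x)\equiv 0$ with $S>0$ forces $\tilde\Phi(x)\equiv 0$, hence $\Psi(T)\equiv \Psi(T^\star)$. Since $t\mapsto t^4$ is strictly increasing on $t\ge 0$ and the physical state lives in $\mathbb{R}^n_{\ge 0}$, this yields $T\equiv T^\star$. Then $\dot T\equiv 0$ substituted into \eqref{tsys}, combined with the equilibrium identity $A_1\Psi(T^\star)+A_2 T^\star+E+Gu^\star=0$ and $\rank G=m$, gives $u\equiv u^\star$; the controller identity $u=-K_P\tilde\Psi_a(T_a)+z$ with $\tilde\Psi_a(T_a)\equiv 0$ then yields $z\equiv u^\star$, as required.

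The main obstacle I anticipate is not the Lyapunov reasoning itself, which is routine once Proposition \ref{pro1} is in hand, but rather the verification that the positive orthant is forward invariant for the closed--loop thermal dynamics so that the strict monotonicity of $t\mapsto t^4$ used in the detectability step can be safely invoked on the actual trajectories. This is physically natural and can be argued from the structure of the radiation and convection terms together with the Metzler--type properties of $A_1,A_2$, but it is the one point that must be handled carefully rather than reduced to an immediate citation of Proposition \ref{pro1}.
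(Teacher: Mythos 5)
Your proposal is correct and follows the paper's reduction exactly up to the last step: Lemma \ref{lem3} supplies the separable Lyapunov function, the computation $\phi_i'(x_i)=T_i^4-\bar T_i^4$ gives the identification $\widetilde{\Phi}_a(x_a)=\tilde\Psi_a(T_a)$, and Proposition \ref{pro1} is then invoked on the shifted thermal model---this is precisely the paper's argument for stability. Where you diverge is in how asymptotic stability is closed. The paper does \emph{not} go through the detectability item (iii); it uses item (ii) directly: since $e_a=\col\bigl(\tilde\Psi^\top(T)S\tilde\Psi(T),\,G_2^\top D\tilde\Psi_a(T_a)\bigr)$ satisfies \eqref{limea} and $S>0$, it concludes $\tilde\Psi(T(t))\to 0$ and hence $T(t)\to T^\star$, with no characterization of the zero-output invariant set. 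Your route---verifying that $e_a\equiv 0$ forces $T\equiv T^\star$, then $u\equiv u^\star$ from $\rank G=m$, then $z\equiv u^\star$ from the controller equation---is slightly longer but actually more complete, since it accounts for convergence of the controller state $z$ to $u^\star$, a point the paper's two-line proof leaves implicit. Your closing scruple about forward invariance of $\mathbb{R}^n_{\geq 0}$ is fair but applies equally to the paper: nonnegativity of $T$ is already needed in Lemma \ref{lem3} (convexity via $\phi_i''=4T_i^3\geq 0$) and in deducing $T\to T^\star$ from $\Psi(T)\to\Psi(T^\star)$; the paper simply takes $T\in\mathbb{R}^n_{\geq 0}$ as the physical state space and does not argue invariance, so your proposal is not at a deficit relative to the published proof on this point.
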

\begin{proof}
The proof of stability is established invoking item (i) of Proposition 1 and identifying
$$
\tilde \Phi_a(x_a)|_{x_a=T_a - \bar T_a}= \tilde {\Psi}_a(T_a).
$$
To prove asymptotic stability we invoke item (ii) and observe that the augmented error signal \eqref{ea} is given in this case by
\begin{equation*}
		e_a= \begin{bmatrix}   \tilde {\Psi}^\top(T) S   \\  G_2^\top D \end{bmatrix}  \tilde {\Psi}(T) .
\end{equation*}
Since $e_a$ verifies \eqref{limea} and $S$ is positive definite we conclude that $ \tilde {\Psi}(T(t)) \to 0$ and consequently $T(t)\to T^\star$.  
\end{proof}


Physically,  considering  matrix $G$ as \eqref{g}  means that for $m$ heating elements there are $n-m$ measured points that are not directly heated by these elements.



\section{Concluding Remarks}
\lab{sec7}	
In this work we identify a class of nonlinear systems for which it is possible to design robust PI controllers with guaranteed stability properties. The class consists of input affine systems with known, constant input matrix $G$ and $n-m$ zero rows. We assume that only the states associated to the non--zero rows of  $G$ are measurable.The systems  have an open--loop stable equilibrium, but is different from the desired operating point. To handle this situation, we follow \cite{JAYetal} and generate new passive outputs for the incremental model, hence the name PI--PBC. Associated to the open--loop stable equilibrium a  Lyapunov function of the form \eqref{h} is assumed to exist. We underscore that, besides convexity, there is no assumption on the function $H_u(x_u)$, which is unknown. Moreover,  the controller does not require the measurement of $x_u$. The functions $\phi_i(x_i)$ are assumed convex and known, but the coefficient $d_i$ are unknown. Under these conditions, we show that, for a well identified class of PI tuning gains, see \eqref{gaipi}, global stability of the proposed PI--PBC is guaranteed. Conditions that ensure global asymptotic stability, are also derived.  

%
\section*{Acknowledgments}
%

This work was supported by the Ministry of Education and Science of Russian Federation (Project 14.Z50.31.0031).

\bibliographystyle{IEEEtran}
\bibliography{IEEEabrv,bibliography}

\end{document}